\newtheorem{theorem}{Theorem}[section]
\newtheorem{lemma}[theorem]{Lemma}
\begin{document}
\author{Jeffrey Ling, Kai Xiao, Dai Yang}
\title{Online Algorithms Modeled After Mousehunt}

\maketitle
\noindent
\thispagestyle{fancy}

\begin{abstract}
In this paper we study a variety of novel online algorithm problems inspired by the game Mousehunt. We consider a number of basic models that approximate the game, and we provide solutions to these models using Markov Decision Processes, deterministic online algorithms, and randomized online algorithms. We analyze these solutions' performance by deriving results on their competitive ratios.
\end{abstract}

\section{Introduction}

Mousehunt is a Facebook game developed in 2006 by HitGrab Inc. The goal of the game is to catch mice using a variety of traps. Each species of mice is worth a certain amount of points and gold. Although collecting gold helps the player afford better traps, collecting points is the ultimate goal of the game. One particularly focus-worthy aspect of the game is that Ronza, a nomadic merchant, visits for short periods of time roughly once a year, and sells valuable exclusive traps during these unannounced visits.

We introduce simple models of this game that involve optimizing the number of points gained over a finite time interval. While the problem's overall description will resemble the classic ski rental problem, the finer details will differ, and we will be able to show different lower bounds on the competitive ratio. We will approach these problems using both deterministic and randomized online strategies to try to achieve the best possible competitive ratios.

In this paper, we will use the convention that the competitive ratio $r$ is always less than 1, i.e. if our algorithm earns value $C_A$, and the optimal offline algorithm earns value $C_{OPT}$, then $C_A \geq r \cdot C_{OPT}$.

We begin by proposing a simple model for Mousehunt, where we start with a basic trap that can selectively catch mice worth one point or one gold. Assuming that we don't know when Ronza will arrive next, and that we have some estimate $x$ of the benefit we gain from Ronza's traps, we are able to prove that it is optimal to hunt for gold if and only if the ratio of the gold cost of the trap $c$ to the timespan $T$ satisfies $c/T \leq 1-\frac{1}{\sqrt{x}}$. That is, the potential amount of benefit we can gain is worth it iff the cost of the trap is not too high. In this case, we obtain a competitive ratio of $1/\sqrt{x}$.

If we randomize our strategy, it turns out we can do better than $1/2$-competitive on average.

\section{A Simple Model}

\subsection{Modeling Mousehunt} In the game Mousehunt, players attempt to catch different types of mice which give players rewards of points and gold when caught. The reward can therefore be represented as a vector $(p,g)$ where $p$ is the amount of points and $g$ is the amount of gold. In the real game, players can use different trap setups to increase their catch rates against certain sets of mice, and the player can select which mice to target by changing their trap setup (by arming certain types of cheese or traveling to certain locations). Traps can be purchased for certain amounts of gold from trapsmiths or from Ronza, who visits the mouse-hunting land of Gnawnia once a year.

\subsection{Initial Problem} Suppose that there are three traps available, $A_1$, $A_2$, and $B$. Using trap $A_1$ will allow the player to catch a mouse worth $1$ gold, and using trap $A_2$ will allow the player to catch a mouse worth $1$ point. Using trap $B$ will allow the player to catch a mouse worth $x$ points for $x>1$. Traps $A_1$ and $A_2$ are available at the start of the game while trap $B$ can only be purchased from the wandering merchant Ronza. There are $T$ time steps in total. Ronza appears for a single time step at some unknown time $y$, and sells trap $B$ at cost $c$. Before each time step, and also at the end, the player makes a decision to lay down either trap $A_1$, trap $A_2$, or $B$, and then immediately reaps the rewards of their choosing. This setup poses two different but related problems: one, to determine an optimal trap-choosing strategy to maximize the expected payoff given some distribution assumptions, and two, to find a strategy that maximizes the competitive ratio.

\section{Markov Decision Process Analysis}

\subsection{Distribution and Optimality}
For the rest of this section, we will assume that Ronza's arrival time is distributed uniformly among $1,2,...,T$. We can perform similar analyses with different distributions, but it will be most clear to analyze the uniform distribution case.

We establish the following lemma concerning optimal online algorithms which solve this problem.
\begin{lemma}
The online algorithm, if it is optimal, can be assumed to take on one of the following formats:
\begin{itemize}
\item Only hunt for the mouse worth $1$ point.
\item Hunt for the mouse worth $1$ coin until $c$ coins are gathered or until the merchant arrives. Purchase the trap $B$ if possible when the merchant arrives, and only hunt for mice worth a maximum amount of points afterward.
\end{itemize}
\end{lemma}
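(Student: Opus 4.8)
The plan is to argue by an exchange/normalization argument that any optimal online policy can be brought into one of the two stated forms without decreasing its expected payoff. First I would observe that, since the traps $A_1$ and $A_2$ are always available and give a deterministic reward of $1$ coin and $1$ point respectively, and trap $B$ gives $x > 1$ points, the only reason ever to play $A_1$ (hunt for gold) is to accumulate enough gold to afford $B$ when Ronza arrives. Concretely, I would split the timeline into the pre-Ronza phase (steps before $y$) and the post-Ronza phase (steps $\ge y$, including the final decision). In the post-Ronza phase the situation is static: if the player owns $B$, playing $B$ every step strictly dominates any other choice ($x>1$); if the player does not own $B$, playing $A_2$ every step dominates. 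So an optimal policy's post-Ronza behavior is forced, and the only freedom is in the pre-Ronza phase.

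Next I would analyze the pre-Ronza phase. Here every step the player either plays $A_2$ (gaining $1$ point now, contributing nothing toward $B$) or $A_1$ (gaining $1$ coin now, i.e. deferring a point in exchange for progress toward the trap). Because Ronza's arrival time is unknown, the decision at each step depends only on the current gold total and the elapsed time (the state), but I would argue further that an optimal policy never "wastes" a coin: once the policy commits to buying coins, it should do so as early and as contiguously as possible, because accumulating the $c$-th coin one step sooner can only weakly increase the probability that $B$ is affordable by the time Ronza shows up (Ronza is uniform, so earlier completion stochastically dominates). A standard interchange argument — swap an earlier $A_2$ step with a later $A_1$ step — shows this rearrangement does not decrease expected payoff. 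Hence an optimal policy's pre-Ronza behavior is: play $A_1$ on the first $c$ steps (or until Ronza arrives), then, having secured the trap, revert to $A_2$ until Ronza arrives. The degenerate case where the policy never buys coins at all is exactly the first bullet ("only hunt for the mouse worth $1$ point").

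Finally I would reconcile the two cases: any optimal policy either commits to the coin-gathering plan (second bullet) or never gathers coins (first bullet), since a policy that gathers $k < c$ coins and then stops gains nothing from those coins (it can never afford $B$) and is strictly dominated by the all-$A_2$ policy, which recovers the lost points. Combining the forced post-Ronza behavior with the normalized pre-Ronza behavior yields exactly the two claimed formats. The main obstacle I anticipate is making the interchange argument fully rigorous in the presence of randomized arrival: one must check that the expected-payoff comparison is monotone in the "time at which $c$ coins are first held," which requires writing the expected payoff as a sum over Ronza's possible arrival times and verifying term-by-term domination — routine but needing care with the boundary step $y=T$ and the "end of game" decision. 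I would also need to confirm that partial mixed strategies (randomizing between $A_1$ and $A_2$ at a given state) cannot beat the best pure strategy, which follows from linearity of expectation once the state space is identified.
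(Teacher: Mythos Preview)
Your proposal is correct and uses essentially the same interchange argument as the paper: swap an earlier point-hunting step with a later gold-hunting step and observe that the swap can only help (the paper phrases it as comparing $O_1$ and $O_2$ across the three cases of Ronza arriving before, between, or after the swapped steps). Your write-up is in fact more careful than the paper's, which leaves implicit the points you spell out---the forced post-Ronza behavior, the dominance of all-$A_2$ over any plan that collects $k<c$ coins and stops, and the reduction from mixed to pure strategies---so no gaps.
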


\begin{proof}
First, suppose that the online algorithm $O_1$ at time $t$ decides to earn points and at the next time step $t+1$ decides to collect gold. Then, consider the alternative online algorithm $O_2$ that collects gold at $t$ and earns points at $t+1$. Both of these choices are conditioned on Ronza having not arrived yet, as it is clear that the only right decision after Ronza arrives is to earn points. If Ronza appears after time $t+1$ or before time $t$, both algorithms perform equally well. If Ronza appears between times $t$ and $t+1$, then the algorithm $O_1$ will have performed worse. Thus, any deterministic optimal online algorithm will collect gold at the beginning, if at all, and hence take on one of the two prescribed formats.
\end{proof}

\subsection{Markov Decision Process}

One approach for solving the initial problem is to view it as a Markov Decision Process. For a fixed $x$, define $f(c,T)$ to be the optimal decision when there are $T$ time steps remaining and we require $c$ more coins for Ronza's trap; the lemma tells us that $f(c,T)$ can only take on two forms: aim for points, or aim for $c$ coins. Define $g(c,T)$ to be expected payoff in points when pursuing this optimal decision. The base cases go as follows. When $c = 0$, we have $f(c,T)$ as aim for points and $g(c,T) = \frac{x+1}{2}(T+1)$, because Ronza comes in the middle of the time steps on average. When $c=1$, we have $g(c,T)$ as $\max(\frac{x+1}{2}T, T+1)$: the latter payoff is achieved by aiming for points, and the former payoff is achieved by aiming for a single coin. When $T = 1, c > 1$, we have $f(c,T)$ as aim for points and $g(c,T) = 2$. Then:

\begin{theorem}
In the state $(c,T)$ where $T > 1$ and $c > 0$, $f(c,T)$ and $g(c,T)$ can be determined by comparing $T+1$ and $\frac{T-1}{T}g(c-1,T-1) + 1$. If the latter is larger, then it is the expected payoff, and the optimal move $f(c,T)$ is to aim for $c$ gold. Otherwise, if the former is larger, then it is the expected payoff, and the optimal move $f(c,T)$ is to aim for points.
\end{theorem}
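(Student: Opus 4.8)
The plan is to use the Lemma to collapse the decision at state $(c,T)$ to a choice between exactly two candidate actions — ``aim for points'' and ``aim for $c$ gold'' — to compute the expected payoff of each, and then to conclude that $g(c,T)$ is the larger of the two while $f(c,T)$ is the corresponding action. Since the base cases $c=0$, $c=1$, and $T=1$ are already established, this recursion pins down $f$ and $g$ on all remaining states.

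First I would treat the ``aim for points'' action. By the Lemma, an optimal algorithm that ever collects points never again collects gold, so committing to points at $(c,T)$ is precisely the first format of the Lemma (only hunting the $1$-point mouse), and since $c>0$ the trap $B$ is never bought. Over the $T+1$ remaining decision points this yields payoff exactly $T+1$, independent of Ronza's behavior.

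Next I would analyze ``aim for $c$ gold.'' This step earns $0$ points and adds one coin toward $B$. I would condition on Ronza's arrival time, which — given that Ronza has not yet appeared — is uniform over the $T$ remaining steps, and split into two cases. With probability $\tfrac{1}{T}$ Ronza appears at this very step; since the required coins have not yet been accumulated, the visit is wasted, $B$ is lost for good, and the best continuation is to hunt points over the remaining $T$ decision points for a total of $T$. With probability $\tfrac{T-1}{T}$ Ronza does not appear now; the conditional arrival distribution over the remaining $T-1$ steps is again uniform, so we reach state $(c-1,T-1)$ having earned $0$ this step, with expected continuation $g(c-1,T-1)$. Averaging gives $\tfrac{1}{T}\cdot T+\tfrac{T-1}{T}g(c-1,T-1)=\tfrac{T-1}{T}g(c-1,T-1)+1$, and comparing this with $T+1$ gives the stated dichotomy.

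The main obstacle is getting the bookkeeping around Ronza's arrival exactly right: I must justify that conditioning on non-arrival leaves the remaining arrival time uniform (a one-line computation with the uniform prior), and — most delicately — that in the ``Ronza arrives while still short a coin'' case we genuinely collect $T$ further points and nothing from $B$, including at the boundary $c=1$ (which the stated $c=1$ base case instead treats as a catch-the-last-coin-in-time scenario, so the timing convention of the model must be reconciled or that case excluded here). I would also recheck the off-by-one between ``time steps remaining'' and ``decision points remaining'' so that the $T+1$ in the points branch and the $T$ in the wasted-Ronza branch are consistent with the base-case formulas for $g(0,T)$ and $g(1,T)$.
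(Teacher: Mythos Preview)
Your proposal is correct and follows essentially the same approach as the paper: reduce via the Lemma to the two candidate actions, compute the points-only payoff as $T+1$, and compute the coin-collecting payoff by conditioning on whether Ronza arrives at the next step (probability $1/T$, yielding $T$) or later (probability $(T-1)/T$, yielding $g(c-1,T-1)$ by uniformity of the conditional arrival time). Your extra care about the $c=1$ boundary and the off-by-one conventions is more scrupulous than the paper itself, which simply asserts the reduction without reconciling those edge cases.
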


\begin{proof}
By the lemma, we may reduce the optimal decision to two possibilities: aim for points, or aim for $c$ coins. If we decide to aim for points, the payoff will be $1+g(c,T-1)$. However, it is not rational to aim for coins in future time steps if we do not aim for coins now. Hence, the payoff must evaluate to $1+g(c,T-1) = 1 + T$. Otherwise, if we decide to aim for $c$ coins, then our payoff will depend on whether or not Ronza arrives in the next time step.

If Ronza arrives in the next time step, then we have no choice but to gain $1$ point after all remaining time steps, accruing a total of $T$ points. This occurs with probability $\frac{1}{T}$.

If Ronza arrives after the next time step, then her arrival time will be uniformly distributed among the remaining $T-1$ time steps, just as in the state $(c-1,T-1)$. This reduction allows us to conclude that the payoff is $g(c-1,T-1)$, and occurs with probability $\frac{T-1}{T}$.

Thus, the expected payoff when aiming for $c$ coins is $\frac{T-1}{T}g(c-1,T-1) + \frac{1}{T}\cdot T = \frac{T-1}{T}g(c-1,T-1) + 1$. Between this payoff and the payoff of $T+1$ from greedily amassing points, whichever one is larger will dictate both $f(c,T)$ and $g(c,T)$.
\end{proof}

\begin{theorem}
Let $r = \frac{c}{T}$. Then asymptotically, $f(c,T)$ will dictate that it is optimal to catch mice to aim for $c$ gold iff $r \le 1-\frac{1}{\sqrt{x}}$.
\end{theorem}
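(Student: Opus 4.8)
The plan is to collapse the two-argument recursion of the previous theorem into a single one-dimensional recursion for the value of the natural ``all in on gold'' strategy, solve that in closed form, and then take the limit $T\to\infty$ with $c/T\to r$ held fixed.

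First I would show that
\[
g(c,T)=\max\!\big(T+1,\; S_{c,T}\big),
\]
where $S_{c,T}$ is the expected payoff of the strategy that hunts for gold until either all $c$ coins are in hand or Ronza arrives; this $S_{c,T}$ satisfies $S_{c,T}=1+\tfrac{T-1}{T}S_{c-1,T-1}$ for $c\ge 2$, with $S_{1,T}=\tfrac{x+1}{2}T$. The point is that the recursion of the previous theorem makes the gold option at $(c,T)$ evaluate to $1+\tfrac{T-1}{T}g(c-1,T-1)$, and since $g(c-1,T-1)\ge(T-1)+1=T$ always, the inductive hypothesis $g(c-1,T-1)=\max(T,S_{c-1,T-1})$ splits into two cases: if $S_{c-1,T-1}\le T$, then the gold option is $\le T<T+1$ and also $S_{c,T}\le T$, so both sides equal $T+1$; if $S_{c-1,T-1}>T$, then the gold option is exactly $1+\tfrac{T-1}{T}S_{c-1,T-1}=S_{c,T}$. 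The base cases $T=1$ (where $g(c,1)=2=T+1$) and $c=1$ (where $g(1,T)=\max(\tfrac{x+1}{2}T,T+1)$ and $S_{1,T}=\tfrac{x+1}{2}T$) start the induction. In particular, $f(c,T)$ calls for hunting gold exactly when $S_{c,T}\ge T+1$.

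Next I would solve the recursion. Multiplying through by $T$ gives $T\,S_{c,T}=T+(T-1)S_{c-1,T-1}$, so $T\,S_{c,T}$ telescopes down to $(T-c+1)S_{1,T-c+1}=\tfrac{x+1}{2}(T-c+1)^2$, yielding
\[
S_{c,T}=\frac{1}{T}\Big[(c-1)T-\tfrac{(c-1)(c-2)}{2}\Big]+\frac{x+1}{2}\cdot\frac{(T-c+1)^2}{T}.
\]
Setting $c=rT$ and letting $T\to\infty$, the right side is asymptotic to $\tfrac{T}{2}\big[r(2-r)+(x+1)(1-r)^2\big]$, while the greedy-points value $T+1$ is asymptotic to $T$. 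Hence hunting for gold is asymptotically optimal precisely when $\tfrac12\big[r(2-r)+(x+1)(1-r)^2\big]\ge 1$.

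Finally I would simplify: expanding, $r(2-r)+(x+1)(1-r)^2=x(1-r)^2+1$, so the condition becomes $x(1-r)^2+1\ge 2$, i.e. $(1-r)^2\ge 1/x$; since $0\le r<1$ and $x>1$, this is equivalent to $1-r\ge 1/\sqrt{x}$, that is, $r\le 1-\tfrac{1}{\sqrt{x}}$, as claimed. I expect the main obstacle to be the first step: rigorously justifying that the optimum is always of the ``pure gold, then pure points'' form, so that the messy two-argument recursion collapses to the clean recursion for $S_{c,T}$, and in particular handling the $c=1$ boundary correctly (with a single coin already collected, Ronza's arrival is no longer a disaster, so the generic recursion of the previous theorem does not apply there and the stated base case must be used instead). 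Everything after that --- the telescoping sum, the limit, and the algebraic identity $r(2-r)+(x+1)(1-r)^2=x(1-r)^2+1$ --- is routine.
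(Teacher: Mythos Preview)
Your argument is correct and lands on exactly the same equation as the paper, but you reach it by a different path. The paper never touches the recursion of Theorem~3.2 in this proof: it simply computes the expected payoff of the ``collect $c$ coins first'' strategy directly, by conditioning on whether Ronza's uniformly distributed arrival time $y$ falls in $[1,c]$ or in $[c+1,T]$, averaging $y$ within each block, and equating the resulting expression to the points-only payoff $T$. That immediately produces the quadratic $xr^2-2xr+(x-1)=0$, which is the same as your $x(1-r)^2=1$.

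Your route instead takes the two-state recursion seriously: you prove by induction that $g(c,T)=\max(T+1,S_{c,T})$, where $S_{c,T}$ solves the one-dimensional recurrence $S_{c,T}=1+\tfrac{T-1}{T}S_{c-1,T-1}$, then telescope to a closed form and pass to the limit. The inductive case split you give (whether $S_{c-1,T-1}$ exceeds $T$ or not) is sound, and the telescoping and final algebra are fine; your asymptotic expression $\tfrac{T}{2}[r(2-r)+(x+1)(1-r)^2]$ is literally the paper's two-case expectation after dividing by $T$. What you buy with the extra work is a genuinely tighter link to Theorem~3.2: you actually justify, rather than assume via Lemma~3.1, that comparing the pure-points value against the pure-gold value $S_{c,T}$ is what determines $f(c,T)$ at every state. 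The paper's version is quicker and more elementary, but it leans on Lemma~3.1 and drops additive constants from the outset rather than deriving an exact closed form first.
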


\begin{proof}
We will no longer keep track of unneeded additive constants as we are determining asymptotic behavior. Suppose that Ronza's arrival time is $y$. We already know from the lemma that the alternative to catching mice to aim for $c$ gold is catching mice to aim for points. The latter yields a payoff of $T$ (in fact, $T+1$). Now we will compute the payoff of the former in a non-recursive fashion.

\begin {itemize}
\item Case 1: Ronza comes after time $c$.

This means that $c \leq y$, and decisions made according to this algorithm will earn $y-c+x(T-y)$ points: $y-c$ points from mice worth $1$ point each, and $x(T-y)$ points from mice worth $x$ points each. The average value of $y$ here is $\frac{c+T}{2}$.

\item Case 2: Ronza comes before time $c$.
This means $c>y$, and the player can never afford trap $B$, so the algorithm will simply catch mice worth $1$ point for all remaining time steps, yielding $T-y$ points. The average value of $y$ here is $\frac{c}{2}$.
\end{itemize}

This means that if we decide to aim for $c$ coins, our expected gain in points is $\frac{T-c}{T}\cdot (\frac{c+T}{2}-c+x(T-\frac{c+T}{2})) + \frac{c}{T}(T-\frac{c}{2})$. Equalizing the payoffs of the two decisions allows us to determine the asymptotic boundary:
\begin{equation*}
\begin{split}
T & = \frac{T-c}{T}\cdot (\frac{c+T}{2}-c+x(T-\frac{c+T}{2})) + \frac{c}{T}(T-\frac{c}{2}) \\
1 & = (1-r)\cdot(1-r)(\frac{x+1}{2}) + r\cdot(1-r/2) \\
0 & = \frac{x}{2}r^2 - xr + \frac{x-1}{2} \\
0 & = xr^2 - 2xr + x-1
\end{split}
\end{equation*}
Solving this quadratic and taking the root such that $r < 1$ yields $r = 1-\frac{1}{\sqrt{x}}$. Therefore, the asymptotic behavior of this Markov Decision Process can be summarized as: until Ronza makes an appearance, aim for $c$ coins if the state $(c,T)$ satisfies $\frac{c}{T} \le 1-\frac{1}{\sqrt{x}}$, and aim for points otherwise.
\end{proof}

Below, we present a few values of $g(c,T)$ when $x=2$, that is to say, when trap $B$ is twice as effective as trap $A$. Bolded entries represent states in which $f(c,T)$ dictates that aiming for $c$ coins is strictly better than aiming for points. Notice that the boundary between the two strategies being optimal closely follows the line $c = (1-\frac{1}{\sqrt{2}})T$, as shown in the theorem above.

\begin{center}
\begin{tabular}{l*{9}{c}r}
(c,T) & 1 & 2 & 3 & 4 & 5 & 6 & 7 & 8 & 9 & 10 \\
\hline
0     & \textbf{3}&	\textbf{4.5}&	\textbf{6}&	\textbf{7.5}&\textbf{9}&\textbf{10.5}&	\textbf{12}&	\textbf{13.5}&	\textbf{15}&	\textbf{16.5}   \\
1     & 2	&	3	&	\textbf{4.5}	&	\textbf{6}	&	\textbf{7.5}	&	\textbf{9}	&	\textbf{10.5}	&	\textbf{12}	&	\textbf{13.5}	&	\textbf{15}
  \\
2     & 2	&	3	&	4	&	5	&	6	&	\textbf{7.3}	&	\textbf{8.7}	&	\textbf{10.2}	&	\textbf{11.7} & \textbf{13.2}
  \\
3     & 2	&	3	&	4	&	5	&	6	&	7	&	8	&	9	&	10	&	\textbf{11.5}
  \\
4     & 2	&	3	&	4	&	5	&	6	&	7	&	8	&	9	&	10	&	11
  \\
5     & 2	&	3	&	4	&	5	&	6	&	7	&	8	&	9	&	10	&	11
  \\

\end{tabular}
\end{center}

It is not difficult to generalize this table for all $x$. Note that for a given $(c,T)$, we can use dynamic programming to fill in this table and compute $f(c,T)$ and $g(c,T)$ in $O(cT)$ time. However, for large values of $c,T$, we can bypass Markov Decision Processes altogether and use Theorem $3.3$ to compute them in $O(1)$ time.

Our analyses so far relied on the distribution of Ronza's arrival time being uniform. If the distribution is not known, however, then Markov Decision Processes cannot be used to model the decision problem. Next, we will show a solution to the algorithm online problem against any adversary.

\section{Competitive Analysis Model}

In this section we analyze online algorithms to solve the initial problem without a known probability distribution. We examine both an optimal deterministic online algorithm, and a randomized online algorithm that achieves the best possible asymptotic competitive ratio.

\subsection{Competitive Ratio Analysis: Deterministic}

If the online algorithm decided only to earn points, the worst case scenario is that the optimal offline algorithm was to earn $c$ gold and buy trap $B$ when Ronza arrived to earn $x$ points per subsequent mouse. This means that $c\leq y$, and the offline algorithm would garner $\max(y-c+x(T-y), T)$ points. If $\max(y-c+x(T-y), T) = T$, the competitive ratio is $1$. We now analyze the case when $\max(y-c+x(T-y), T) = (y-c)+x(T-y)$, where the competitive ratio of the two algorithms is $\frac{T}{(y-c)+x(T-y)}$.

We can find an upper bound on the reciprocal of this fraction to find a lower bound on this ratio. Let $r = \frac{c}{T}$. Then the reciprocal of the ratio can be upper bounded as follows.
\begin{align*}
\frac{(y-c)+x(T-y)}{T} &= x-\frac{c}{T}+y\left(\frac{1-x}{T}\right) \\
&\leq x-\frac{c}{T}+c\left(\frac{1-x}{T}\right)
= x\left(1-\frac{c}{T}\right) = x(1-r)
\end{align*}
where we used the fact that $(1-x)$ is negative and $c\leq y$. The competitive ratio is thus lower bounded by $\frac{1}{x(1-r)}$.

If the online algorithm decided to collect gold to try to buy the trap, the worst case scenario is that Ronza appears before the algorithm has collected enough gold to purchase the trap. In this case, $c > y$, and the online algorithm would earn $T-y$ points (it would earn gold until the merchant arrived at time $y$, after which it would earn points) and the offline optimal algorithm would earn $T$ points. The competitive ratio here is $\frac{T-y}{T}$. Here we find that the ratio is
\begin{equation*}
1-\frac{y}{T} > 1-\frac{c}{T} = 1-r
\end{equation*}
Thus, given knowledge of $x,c,T$, we can choose our strategy based on the larger of the values $\frac{1}{x(1-r)}$ and $1-r$ to achieve the best competitive ratio. Since these two expressions vary in opposite directions as $r$ increases, the worst case occurs when the two expressions are equal, i.e.
\begin{equation*}
\frac{1}{x(1-r)}=(1-r) \iff 1-r=\frac{1}{\sqrt{x}}
\end{equation*}
Thus, since our analysis was worst case, we obtain a tight lower bound of $\frac{1}{\sqrt{x}}$ on our competitive ratio. Thus this deterministic algorithm is always $\frac{1}{\sqrt{x}}$ competitive for all values of $r$.

\subsection{Competitive Ratio Analysis: Randomized}
Here we describe and analyze a randomized online algorithm that achieves a worst case competitive ratio of $1/2$ for all values of $x$, $c$, and $T$.

First, if $x(T-c) < T$, then it is always optimal just to go for points because going for the trap would result in $y-c+x(T-y) \leq x(T-c) < T$ points at the end. This information is available to the online algorithm so the online algorithm is exactly the same as the offline one in this case.

Thus, we just have to analyze the case that $x(T-c) \geq T$. As before, any optimal online algorithm will either only aim for points or it will collect gold until Ronza arrives or until it has collected $c$ gold, after which it will collect points and buy Ronza's trap if possible.

The randomized algorithm is to choose to try to gather gold for Ronza's trap with probability $q$ and to choose to aim for points only with probability $1-q$. We will determine $q$ later based on $x$ and $r = \frac{c}{T}$. Call the gold gathering strategy $S_g$ and the point gathering strategy $S_p$. Let $S$ denote the randomly chosen strategy.

Given our choice of $q$, the adversary's goal is to choose Ronza's arrival time $y$ so that the competitive ratio is minimized. Our goal is to choose a $q$ to maximize this minimal value.

If $y < c$ then the competitive ratio is
$$
\begin{cases}
\frac{T-y}{T} &\mbox{if } S = S_g \\
1 &\mbox{if } S = S_p
\end{cases}
$$

If $y \geq c$ then the competitive ratio is
$$\begin{cases}
1 &\mbox{if } S = S_g \\
\frac{T}{(y-c)+x(T-y)} &\mbox{if } S = S_p
\end{cases}$$

Thus the overall competitive ratio, denoted by $R(x,c,T)$, is

\[R(x,c,T) = \left\{
  \begin{array}{lr}
    q \cdot \left(\frac{T-y}{T}\right)+(1-q) & : y < c\\
    q+(1-q)\frac{T}{(y-c)+x(T-y)}  & : y \ge c
  \end{array}
\right.
\]

Suppose $q$ has been chosen already. If the adversary chooses $y<c$, then $R(x,c,T)$ is minimized when $y$ is maximized, or when $y = c-1 \approx c$. If $y \geq c$, then minimizing $R(x,c,T)$ involves maximizing
\begin{equation*}
(y-c)+x(T-y) = xT-c+y(1-x)
\end{equation*}
which occurs when $y$ is minimized since $x > 1$. Thus in this case the adversary will choose $y = c$.

Thus we get that

\[R(x,c,T) = \left\{
  \begin{array}{lr}
    q \cdot\left(\frac{T-c}{T}\right)+(1-q) = 1-\frac{c}{T}q & : y < c\\
    q+(1-q)\frac{T}{x(T-c)} = \frac{T}{x(T-c)}+\left(1-\frac{T}{x(T-c)}\right)q  & : y \ge c
  \end{array}
\right.
\]

Note that the coefficient of $q$ is negative in the case $y<c$ and the coefficient of $q$ is positive in the case $y\geq c$  since $x(T-c) \geq T$. As $q$ increases, $R(x,c,T)$ goes down in the case $y < c$ and goes up in the case $y \geq c$.

The two possible expressions for $R(x,c,T)$ thus change in opposite directions as $q$ is varied. Thus, if we want to maximize the minimum of these two numbers, we have to set them equal. This gives
\begin{align*}
\frac{T}{x(T-c)}+\left(1-\frac{T}{x(T-c)}\right)q &= 1-\frac{c}{T}q \\
\iff q\left(1-\frac{T}{x(T-c)}+\frac{c}{T}\right) &= 1-\frac{T}{x(T-c)} \\
\iff q\left(\frac{x(T-c)(T)-T^2+cx(T-c)}{x(T-c)(T)}\right) &= \frac{-T+x(T-c)}{x(T-c)} \\
\iff q &= \frac{-T^2+xT(T-c)}{x(T-c)(T)-T^2+cx(T-c)} \\
q &= \frac{xT^2-T^2-cxT}{xT^2-T^2-c^2x}
\end{align*}

This gives the optimal value of $q$ given $x$, $c$, and $T$. If we let $r=\frac{c}{T}$, we can rewrite this as
\begin{equation*}
q = \frac{x-1-rx}{x-1-r^2x}
\end{equation*}
We can also rewrite the condition $x(T-c)\geq T$ as
\begin{align*}
x(1-r) \geq 1 \\
\iff x-xr \geq 1 \\
\iff \frac{x-1}{x} \geq r
\end{align*}

Plugging this value of $q$ back into $R(x,c,T)$, we get that the competitive ratio of this randomized algorithm is
\begin{equation*}
R(x,c,T) = 1-\frac{c}{T}q = 1-rq = 1-\frac{xr-r-r^2x}{x-1-r^2x} = \frac{(x-1)(1-r)}{x-1-r^2x}
\end{equation*}

To find a lower bound on the competitive ratio of $R$, we need to minimize $R(x,c,T)$ over all values of $x$, $c$, and $T$. Since $R$ only depends on the ratio $r = \frac{c}{T}$ asymptotically, we can just consider $R(x,r)$.

To minimize $R$, we compute the partial derivative
\begin{align*}
\frac{\partial R}{\partial r} &= \frac{-(x-1)(x-1-r^2x)-(-2rx)(x-1)(1-r)}{(x-1-r^2x)^2} \\
&= \frac{(x-1)(-(x-1-r^2x) + 2rx - 2r^2x)}{(x-1-r^2x)^2} \\
&= \frac{(x-1)(-r^2x + 2rx - x + 1)}{(x-1-r^2x)^2}
\end{align*}

$R$ is minimized when the partial derivative is $0$, or at the endpoints. At the endpoint $r=0$ we get $R(x,r) = 1$, and at the endpoint $r=\frac{x-1}{x}$, we know $x-1=rx$, so
\begin{equation*}
R(x,r) = \frac{rx(1-r)}{rx-r^2x} = 1
\end{equation*}

When the partial derivative is 0, we have that \begin{equation*}
(x-1)(1 - x(1-r)^2)=0
\end{equation*}
Since $x>1$, solving this gives
\begin{equation*}
r = 1-\frac{1}{\sqrt{x}}
\end{equation*}
This value of $r$ is valid because we know that $r=1-\frac{1}{\sqrt{x}}<1-\frac{1}{x} = \frac{x-1}{x}$.
Plugging this value of $r$ back into $R(x,r)$ we get that
\begin{equation*}
R(x,r) = \frac{(x-1)\left(1-\left(1-\frac{1}{\sqrt{x}}\right)\right)}{x-1-\left(1-\frac{1}{\sqrt{x}}\right)^2x} = \frac{1}{2}\left(\frac{1}{\sqrt{x}}+1\right)
\end{equation*}

We can find the minimum value of this over all $x$. If we let $x$ range from $1$ to $\infty$ we see that $R(x,r) > \frac{1}{2}$ for all $x$.

One way to intuitively derive the same competitive ratio is that the optimal offline algorithm clearly either goes for gold for Ronza's trap or goes for points the entire time. Thus, a basic randomized algorithm would just flip a fair coin. If the coin landed heads, it would choose to go for gold, and if it landed tails, it would choose to go for points. $\frac{1}{2}$ of the time the randomized algorithm will match the optimal offline algorithm, so we are at least $\frac{1}{2}$ competitive.

The result we showed was a proof that the algorithm was $\frac{1}{2}$ competitive for all values of $x$, $c$, and $T$. However, in reality, for most values of $x$, $c$, and $T$, the competitive ratio our randomized algorithm obtains is $R(x,r)$, which, for most values of $r$, is much better than the ratio obtained by single randomized coin flip algorithm. As an example, consider the graphs shown in Figure~\ref{fig:graph} of $R(x,r)$ for the cases $x = 4$ and $x = 100$ for the appropriate range of $r\in\left(0,\frac{x-1}{x}\right)$.

\begin{figure}[h]
\centering
\includegraphics[scale=0.5]{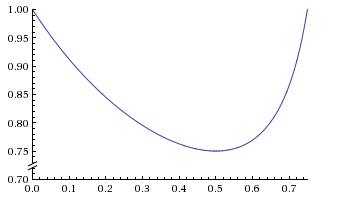}
\includegraphics[scale=0.5]{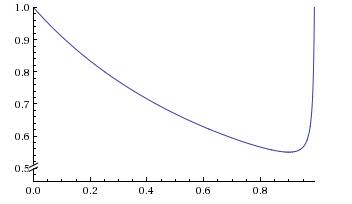}
\caption{Left plot: $x=4$. Right plot: $x = 100$.}
\label{fig:graph}
\end{figure}

\subsection{Proof of Optimality of Randomized Algorithm}
We will conclude this section with a proof that for any randomized online algorithm, an asymptotic competitive ratio of $\frac{1}{2}$ over all values of $x$, $c$, and $T$ is optimal using Yao's minimax principle.

Yao's minimax principle states that the expected cost of a randomized algorithm on the worst case input is no better than the worst case probability distribution of inputs for a deterministic algorithm that works best for that worst case distribution. Thus, we simply have to present a probability distribution such that no deterministic algorithm can perform very well for it.

It is not difficult to find such a distribution. Given the parameters $x$, $c$, and $T$, we can define two possible inputs (which are just values of $y$, Ronza's arrival time) and give them as inputs to the deterministic algorithm with probability $\frac{1}{2}$ each. The first input is $y = c$ and the second input is $y = c-1$. Then, the deterministic algorithm can be analyzed as follows.

\begin{itemize}
\item Case 1: The deterministic algorithm gathers coins for the first $c$ steps (assuming Ronza doesn't appear). Then, for the input $y=c$, the deterministic algorithm will be $1$-competitive. For the input $y=c-1$, at time $t=c-1$ the algorithm will realize that it can not buy the trap, and so it will then gather points afterwards. This gives us an overall competitive ratio of
\begin{equation*}
\frac{1}{2}\left(1+\frac{T-c}{T}\right)=\frac{1}{2}(1+(1-r))
\end{equation*}

\item Case 2: The deterministic algorithm does anything else.
Then, this means at time $t=c$, the algorithm will have less than $c$ gold. Then, for the input $y=c$, the algorithm will fail to buy the trap. For $y=c-1$, the algorithm will also fail to buy the trap, and at best it will be $1$-competitive. The algorithm will have earned at most $c$ points in the first $c$ time steps. The optimal offline algorithm will buy the trap for $c$ gold in the case $t=c$. Thus the overall competitive ratio is
\begin{equation*}
\frac{1}{2}\left(\frac{c+T-c}{(y-c)+x(T-y)}+1\right) \leq \frac{1}{2}\left(\frac{T}{x(T-c)}+1\right) = \frac{1}{2}\left(1+\frac{1}{x(1-r)}\right)
\end{equation*}
\end{itemize}

In that special case that $r=1-\frac{1}{\sqrt{x}}$, we know that the competitive ratio of this deterministic algorithm will be equal to $\frac{1}{2}\left(1+\frac{1}{\sqrt{x}}\right)$. Thus, as $x$ approaches infinity, any optimal deterministic algorithm for this input will be at best $\frac{1}{2}$ competitive, so $\frac{1}{2}$ competitive is as good as any randomized algorithm can be for the original problem.

\section{Unknown Cost, Fixed Arrival Time}
In this section, we explore a variant of our initial problem. Now, suppose that Ronza's arrival time $y$ is known, but the cost $c$ of buying trap $B$ is unknown.

\subsection{Online and Offline Algorithms}

First, we consider the offline algorithm. If $c$ is known, then this reduces to our initial offline problem. Therefore, if $c \leq y$, the optimal algorithm will get $\max(y-c + x(T-y), T)$ points, and if $c > y$, then the algorithm gets $T$ points.

Any deterministic online algorithm can be characterized by a number $m \leq y$, where it collects gold for $m$ of the first $y$ time steps, and hunts for points otherwise. It will buy Ronza's trap at time $y$ if it is affordable at time $y$. Now, there are two possible outcomes:
\begin{itemize}
\item If the algorithm manages to collect $c$ gold, i.e. $m \geq c$, then it is always optimal to buy trap $B$ if possible. Then the algorithm earns a total of $y-m + x(T-y)$ points.
\item If $m < c$, then the algorithm earns $T-m$ points.
\end{itemize}

In order to find a competitive ratio for this online problem, we consider the problem from the perspective of the adversary. For given choices of $c$ and $m$, we have the following cases and known ratios for the online and offline algorithms:
\begin{enumerate}
\item If $c > y$, we have a ratio of $\frac{T-m}{T}$
\item If $m < c \leq y$, and $y-c+x(T-y) < T$, we have a ratio of $\frac{T-m}{T}$
\item If $m < c \leq y$, and $y-c+x(T-y) \geq T$, we have a ratio $\frac{T-m}{y-c+x(T-y)}$
\item If $m \geq c$, and $y-c+x(T-y) < T$, we have a ratio of $\frac{y-m+x(T-y)}{T}$
\item If $m \geq c$, and $y-c+x(T-y) \geq T$, we have a ratio of $\frac{y-m+x(T-y)}{y-c+x(T-y)}$
\end{enumerate}
As the adversary, our job is to design $c$ for fixed $m$ in order to produce the smallest ratio. Note that since cases 1 and 2 produce the same ratio, and there always exists a $c$ to satisfy case 1, we need not consider case 2. Similarly, because
\begin{equation*}
y-m + x(T-y) \geq T-m \iff (x-1)(T-y) \geq 0
\end{equation*}
is true for any $m$, case 4 is redundant given case 1. 

Now notice that as the ratio in case 5 is an increasing function of $c$, its minimum is achieved at $c=0$. Thus, if we compare the ratios in cases 1 and 5, we have
\begin{equation*}
\frac{T-m}{T} < \frac{y-m+x(T-y)}{y+x(T-y)} \\
\iff Ty + xT^2 - xTy - my - xTm + mxy < Ty - Tm + xT^2 - xyT \\
\end{equation*}
Canceling terms and rearranging, we obtain
\begin{equation*}
0 < m(x-1)(T-y)
\end{equation*}
which is always true. Thus, case 5 is redundant given case 1. 

It remains to compare the ratios of cases 1 and 3, and for an online algorithm's choice of $m$, it is the adversary's goal to choose $c$ to obtain the smaller ratio. Note that the value of $c$ that minimizes ratio 3 is $c=m+1$, assuming such a $c$ satisfies the constraints, so we are left with two ratios to compare: $\frac{T-m}{T}$ from case 1, and $\frac{T-m}{y-m-1+x(T-y)}$ from case 3.

In words, case 1 corresponds to when both offline and online algorithm cannot buy Ronza's trap, and case 3 corresponds to when the offline can and does buy the trap while the online cannot, with conditions $c \leq (x-1)(T-y)$ and $m < c \leq y$.

\subsection{Worst Case Analysis}

Suppose the online algorithm chooses a value of $m = \min(\lfloor(x-1)(T-y)\rfloor, y)$. In this case, no integral value of $c$ can satisfy the constraints $c \leq (x-1)(T-y)$ and $m < c \leq y$ in case 3, so the resulting ratio is $\frac{T-(x-1)(T-y)}{T}$ or $\frac{T-y}{T}$, depending on whether or not $(x-1)(T-y) \leq y$.

In the other case, suppose the online algorithm chooses $m < \min(\lfloor(x-1)(T-y)\rfloor, y)$. The worst competitive ratios that the adversary can return are $\frac{T-m}{T}$ and $\frac{T-m}{y-m-1+x(T-y)}$. The latter expression can be written as $1 - \frac{(x-1)(T-y)-1}{y-m-1+x(T-y)}$. Because these are both decreasing functions of $m$, the online algorithm would choose $m=0$ to obtain a worst case ratio of $\frac{T}{y-1+x(T-y)}$.

The online algorithm, given $x,y,T$, can choose $m$ to obtain the better competitive ratio. We consider the cases to obtain a tight lower bound on the optimal competitive ratio. In the following analysis, let $r = y/T$ and $\alpha = 1-r$.

Case 1: $(x-1)(T-y) \geq y$, i.e. $x \geq \frac{y}{T-y} + 1 = \frac{1}{1-r}$ or $\alpha \geq 1/x$. Here, we also know that $\lfloor(x-1)(T-y)\rfloor > m \geq 0$, where $m$ is an integer, so $(x-1)(T-y) >1$, i.e. $x \geq \frac{1}{T-y}+1 \geq \frac{1}{T}+1$.  Note that the ratio our online algorithm can obtain is $\max(1-r, \frac{1}{r - 1/T + x(1-r)})$, where the first is a decreasing function of $r$ and the second is increasing. Thus, a lower bound on the max of the two ratios is obtained when we set them equal and solve for the optimal $\alpha' = 1/r'$, i.e.
\begin{align*}
\frac{1}{r'-1/T + x(1-r')} &= 1-r' \\
1 &= x{\alpha'}^2 + \alpha'(1-\alpha' - 1/T) \\
0 &= (x-1){\alpha'}^2 + (1-1/T)\alpha' - 1 \\
\alpha' &= \frac{-1 + 1/T + \sqrt{(1-1/T)^2 + 4(x-1)}}{2(x-1)}
\end{align*}
We confirm that this value of $\alpha'$ satisfies the condition $\alpha' \in (0,1)$, as $\alpha'$ is a decreasing function of $x$, and at the endpoint $x=1+\frac{1}{T}$, we can compute that
\begin{equation*}
\alpha' = \frac{-1 + 1/T + \sqrt{(1-1/T)^2 + 4(1/T)}}{2(1/T)} = \frac{-1+1/T+1+1/T}{2(1/T)} = 1
\end{equation*}
The value also satisfies $\alpha' \geq 1/x$ because
\begin {align*}
\frac{-1 + 1/T + \sqrt{(1-1/T)^2 + 4(x-1)}}{2(x-1)} &\geq 1/x \\
\iff (1-1/T)^2 + 4(x-1) &\geq (1-1/T)^2 + 4(1-1/x)(1-1/T) + 4(1-1/x)^2 \\
\iff 1 &\geq \frac{1}{x}(1-1/T) + \frac{x-1}{x^2} \\
\iff x + 1/x - 2 &\geq -1/T
\end{align*}
which is always true.

Then the lower bound is $1-r = \alpha'$ as given above, which asymptotically in $x$ is $\boxed{O(1/\sqrt{x})}$.

Case 2: $(x-1)(T-y) < y$, i.e. $x < \frac{1}{1-r}$ or $\alpha < 1/x$. Our online algorithm now obtains a ratio $\max(1-(x-1)(1-r), \frac{1}{r-1/T + x(1-r)})$. Both ratios are now increasing in $r$, so to lower bound the maximum, we set $r$ to minimum, or $\alpha$ to its maximum $1/x$.

We obtain the ratio
\begin{equation*}
\max\left(1-(x-1)\frac{1}{x}, \frac{1}{\frac{x-1}{x} + 1 - 1/T}\right)
= \max\left(1/x, \frac{x}{2x - 1 - x/T}\right)
\end{equation*}

Then since we are able to achieve the second ratio, we obtain asymptotically in $x$ a competitive ratio of $\boxed{\frac{1}{\left(2-\frac{1}{T}\right)}}$.

Surprisingly, we are able to obtain an equal asymptotic competitive ratio of $O(1/\sqrt{x})$ as in the arrival time unknown problem, given certain inputs, and a better competitive ratio of $1/(2-1/T)$ for any other input. 

\section{Generalizations}
In reality, the game of Mousehunt is a lot more complex. In order to better approximate the game, we can improve our model of the game and attempt to analyze those models. Thus, we can generalize our model further and ask the follow-up questions:
\begin{itemize}
\item What if there are traps $L_1, L_2 \dots L_m$ that can catch different mice and have different costs that are always available for purchase at a local trapsmith?
\item What if we don't know $x$, the effectiveness of Ronza's trap, ahead of time?
\item What if the mice give reward vectors $(p_i, g_i)$, where $p_i$ is the number of points gained and $g_i$ is the amount of gold collected from catching mouse $i$?
\item What if Ronza appears with multiple traps available? What if Ronza appears multiple times?
\item What if there is a probability distribution for the unknown variables in the problem? What can be said for some common distributions other than the uniform distribution? What if there are some other restrictions on the unknown variables, like upper and lower bounds?
\item What if multiple parameters are unknown at the same time? (For example, if both the cost and the arrival time of Ronza's trap are unknown).
\end{itemize}
These problems are much harder to analyze due to the increase in the number of parameters. For example, in the case of Ronza appearing multiple times, it becomes important to consider strategies that may buy certain Ronza traps early on in order to buy other Ronza traps in the future. Because these problems are better approximations of the actual game, they are definitely worth exploring in the future.

\section{Conclusion}
The above analysis shows many interesting results that spring out of a basic model of the game Mousehunt. Assuming the basic model of unknown arrival time, if we fix the probability distribution of the arrival times, we can model the problem as a Markov Decision Process problem and solve for the optimal deterministic algorithm. If we don't fix the probability distribution, an optimal deterministic online algorithm can still always achieve better than $1/\sqrt{x}$-competitive, and with randomization, can achieve better than $1/2$-competitive. Using Yao's minimax principle we show that we can do no better asymptotically than $1/2$-competitive over all possible values of the parameters, so the asymptotic bound of $1/2$-achieved by the randomized algorithm is strict.

Under the model of unknown cost, an optimal algorithm can actually achieve roughly a ratio of $O(1/\sqrt{x})$ in most cases and an even better ratio of $\frac{1}{\left(2-1/T\right)}$ in special cases. While these values are asymptotically similar to the values obtained for the model of unknown arrival time, the exact expressions for the competitive ratios and the analyses of these online algorithms differ greatly.

This Mousehunt problem appears similar to the well known ski rental problem, which has an optimal randomized online algorithm that achieves $(e-1)/e$-competitiveness. However, the algorithms and analysis results for these two problems are quite different, demonstrating the fundamental difference between the problems.

Finally, the Mousehunt problem has many more parameters, and so its optimal competitive ratio will depend on more parameters. For most parameter settings the competitive ratio is much better than the worst case ratio of $1/2$, but for the worst possible parameter settings, the ratio is still $1/2$.

The Mousehunt problem can also be applied to similar situations in other fields. For example, consider the problem of maximizing your net worth in life. At every time step, you have to decide between earning money now (e.g. working at a grocery store) and preparing yourself for future work by studying hard in school, establishing connections, and learning about entrepreneurship. In the case that a golden opportunity comes knocking at your door (for example, if a venture capitalist offers to hear your startup sales pitch), you could potentially get a huge boost in your pay rate if you impress him enough to have him invest in you. However, if you are unprepared and have not worked enough, then you can't take advantage of the opportunity when it comes. Our algorithm details when exactly it is better to earn money now or study now to prepare for the future, assuming that you have some estimate on the gain in wealth upon founding a startup.

While these results probably will not affect how people play the game of Mousehunt or how people try to maximize their net worth due to the level of complexity of these real situations compared to our model, they still show interesting results and methods of analysis of online algorithms, and how randomization can be used to improve an algorithm's competitiveness. Furthermore, this paper demonstrates how even a simple online problem can produce complex and unexpected analysis results, such as the asymptotic $1/\sqrt{x}$ competitive ratio.

\section{Acknowledgements}

We would like to thank our instructor Prof. Karger for guidance on this project, as well as HitGrab Inc. for developing Mousehunt.

\end{document}